\newcommand{\algref}[1]{{\rm Algorithm~\ref{alg:#1}}}
\newtheorem{theorem}{Theorem}[section]
\newtheorem{corollary}{Corollary}[section]
\newtheorem{condition}{Condition}[section]
\newtheorem{proposition}{Proposition}[section]
\newcommand{\thetab}{\mathbf{\theta}}
\newcommand{\R}{\mathbb{R}}
\newcommand{\sba}{\mathbf{s}}
\DeclareSymbolFont{bbold}{U}{bbold}{m}{n}
\DeclareSymbolFontAlphabet{\mathbbold}{bbold}
\newcommand{\lineref}[1]{Line~\ref{line:#1}}
\definecolor{mocolor}{rgb}{0,0.5,0.5}
\definecolor{mmcolor}{rgb}{0.5,0,0.5}
\newcommand{\change}[1]{{#1}} %Use this to 'turn off' the red
\title{Latency considerations for stochastic optimizers in variational quantum algorithms}
\author{Matt Menickelly}
\email{mmenickelly@anl.gov}
\affiliation{Mathematics and Computer Science Division, Argonne National Laboratory, 9700 S. Cass Ave., Lemont, IL 60439}
\author{Yunsoo Ha}
\email{yha3@ncsu.edu}
\affiliation{Edward P. Fitts Department of Industrial and Systems Engineering, North Carolina State University, 915 Partners Way, Raleigh, NC 27601}
\author{Matthew Otten}
\email{mjotten@hrl.com}
\affiliation{HRL Laboratories, LLC, 3011 Malibu Canyon Road, Malibu, CA 90265}
\begin{document}
\maketitle

\begin{abstract}
Variational quantum algorithms, which have risen to prominence in the noisy intermediate-scale quantum setting,
 require the implementation of a stochastic optimizer on classical hardware. 
To date, most research has employed algorithms based on the stochastic gradient iteration as the stochastic classical optimizer. 
In this work we propose instead using stochastic optimization algorithms that yield stochastic processes emulating the dynamics of classical deterministic algorithms.
This approach results in methods with theoretically superior worst-case iteration complexities, at the expense of greater per-iteration sample (shot) complexities.
We investigate this trade-off both theoretically and empirically and conclude that preferences for a choice of stochastic optimizer should explicitly depend on a function of both latency and shot execution times. 
\end{abstract}

\section{Introduction}

Quantum computers have the potential to perform many important calculations faster than their classical
counterparts do. Disparate domains such as quantum chemistry~\cite{lanyon2010towards}; nuclear~\cite{cloet2019opportunities}, 
condensed matter~\cite{smith2019simulating}, and
high energy~\cite{nachman2021quantum} physics; machine learning and data science~\cite{biamonte2017quantum};
and  finance~\cite{orus2019quantum} are 
all theorized to benefit from quantum algorithms in various ways. In the near-term, the so-called 
noisy intermediate-scale quantum (NISQ)~\cite{preskill2018quantum} era, these theoretical benefits are hard to realize,
because the canonical quantum algorithms that are used in many of the domains, such as 
quantum phase estimation~\cite{dorner2009optimal}, require gate depths that are  expected to be realized only with
fault-tolerant, error-corrected quantum computers~\cite{preskill1998fault}. 
Variational quantum algorithms (VQAs) attempt to
lower gate depth requirements by replacing some of the  requirements with the need
to perform optimization using classical computers~\cite{cerezo2021variational}. 
Such algorithms have shown success on
NISQ hardware in eigenvalue estimation~\cite{o2016scalable}, 
dynamical evolution~\cite{yuan2019theory,otten2019noise,kandala2017hardware}, machine 
learning~\cite{mitarai2018quantum,otten2020quantum}, and many other problems~\cite{cerezo2021variational}. 
The variation quantum eigensolver (VQE)
is perhaps the best-known example, with many demonstrations performed on a variety
of physical and chemical systems~\cite{kandala2017hardware,parrish2019quantum}. 
One of the key bottlenecks in VQAs is the optimization
step; the functions being optimized, as well as their higher-order derivative information, need to be estimated via a 
limited number of samples, necessitating the use of stochastic optimization algorithms. 
A simple way of quantifying the total cost would be through the number of shots: for each
function evaluation, the quantum computer needs to be queried many times to get an estimate
of the function to be used in a classical optimization routine. These sampling queries
are in addition to the need to run many different sets of parameters along 
the optimization trajectory, which introduces a ``circuit switching'' latency. 
 Given that many of today's quantum computers are available
in a cloud setting, there can be additional overhead due to network latency~\cite{sung2020using}.
These latencies can also vary among different architectures; for example, a
superconducting transmon quantum processor has measurement times on the order of a few
microseconds~\cite{gambetta2007protocols}, while a trapped ion system can take hundreds of microseconds~\cite{clark2021engineering,bruzewicz2019trapped}. This is, of course,
in addition to gate times and reset times, all of which increase the 
time to take a single sample.
Designing optimization algorithms that take these various times into account is important
for minimizing the total amount of potentially expensive quantum computer time used. 

A few stochastic gradient methods have been developed specifically for the quantum regime
that attempt to minimize the number of shots~\cite{kubler2020adaptive}, as well as those that attempt
to minimize total wall time~\cite{sung2020using}. 
However, stochastic gradient methods,
including the popular Adam optimizer~\cite{kingma2014adam}, are typically deployed in applications where accuracy
does not matter  much. In quantum chemistry, where VQE is used extensively, there 
is a standard level of accuracy, known as ``chemical accuracy,'' that is desirable 
because it is the minimum accuracy required to allow for chemically relevant predictions~\cite{helgaker2014molecular}.
Many stochastic gradient methods also require careful hyperparameter tuning~\cite{Schaul14,asi2019importance},
particularly in the selection of a step-size parameter. 
In VQAs, each sample (or observation) can be 
 expensive because quantum computer time is highly limited. Doing extensive hyperparameter sweeps
is, therefore, untenable. 

\subsection{Our Contributions}
Given these circumstances, we suggest a new optimization algorithm, 
SHOt-Adaptive Line Search (SHOALS).
In each iteration, SHOALS computes a stochastic estimate of the gradient of the cost function, computes a trial step, and evaluates the cost function at both the current set of parameters and the trial step. If sufficient decrease within a confidence interval is detected, the step is accepted, and the trial step becomes the current set of parameters for the next iteration. 
In every iteration, SHOALS updates certain accuracy parameters that, when coupled with conservative concentration inequalities, determine sample sizes for computing function and gradient values. 

The convergence theory behind SHOALS is derived from ALOE ~\cite{jinscheinbergxie}, which in turn is an extension of previous work in ~\cite{blanchet2019convergence,paquette2020stochastic,berahas2021global}.
These works in adaptive optimization consider the quantities involved in deterministic methods of optimization (in particular, trust region methods and line search methods) and replace these quantities with random variables to yield an analyzable stochastic process. 
As a result, one derives stochastic variants of deterministic methods that yield, up to constants, the worst-case iteration complexity guarantees of the deterministic methods with high probability;
SHOALS achieves this by mimicking the dynamics of gradient descent with a line search. 
These results are significant because the iteration complexities of deterministic methods are of a different order than those of stochastic methods. 
For instance, in the worst case, given Lipschitz continuous $f$, the number of iterations that a deterministic 
gradient descent method 
requires to attain $\|\nabla f(\theta^k)\| \leq \epsilon$ is on the order of $1/\epsilon^2$; this bound is known to be tight 
\cite{cartis2010complexity,cartis2012oracle}.\footnote{\change{In general, the worst-case iteration complexity for \emph{any} unconstrained deterministic first-order optimization method to attain $\|\nabla f(\theta^k)\|\leq\ \epsilon$ given a Lipschitz continuous gradient and Hessian is on the order of $\epsilon^{-1.75}$ \cite{carmonlower}. Such a theoretically best worst-case rate is indeed attained by accelerated gradient descent \cite{carmon2017convex,jin2018accelerated}.} }
In the same class of problems, the number of iterations for a first-order method based on the classical Robbins-Monro stochastic gradient iteration to achieve the 
same degree of stationarity is on the order of $1/\epsilon^4$ \cite{ghadimilan}, which is strictly worse.\footnote{\change{In general, the theoretical best worst-case iteration complexity for \emph{any} unconstrained stochastic first-order optimization method to 
attain $\|\nabla f(\theta^k)\|\leq\epsilon$ - the definition of which must be more carefully provided than in this footnote - is on the order of $1/\epsilon^3$ \cite{arjevani19}. Such a theoretical best rate is attained, for instance, by SPIDER \cite{fang2018spider}, at the cost of evaluating two stochastic gradients per iteration (as opposed to one stochastic gradient per iteration required by the classical Robbins-Monro iteration). }}
In the quantum computing setting, adopting an algorithm like SHOALS translates to significantly lower 
total time, since certain large parts of the latency costs are paid
per iteration, rather than per shot, at the expense of potentially higher (but dynamic) numbers of shots. 
We note that a recent paper on another classical optimizer for VQAs employs a Bayesian 
line search~\cite{tamiya2021stochastic}, which is fundamentally 
different from the line search considered in this paper. 

\change{
We stress that SHOALS is not a new algorithm, but is a practical extension of ALOE \cite{jinscheinbergxie}.
Specifically, SHOALS allows for per-parameter sampling of directional derivatives, enabled by parameter shift gradients. 
However, we do note that this is the first work to explicitly consider adaptive stochastic methods for use in VQAs, 
and we illustrate - in terms of latency cost and shot execution tradeoffs - why such algorithms are extremely promising 
in this setting, even though they were not explicitly designed for this setting.}

We provide numerical
experiments demonstrating the efficacy of SHOALS  on a variety of chemical molecules. 
Depending on the specifics of a simulated quantum computing environment, SHOALS can
reduce the time needed to reach chemical accuracy by several orders
of magnitude compared with other state-of-the-art optimization algorithms, 
such as iCANS~\cite{kubler2020adaptive} and Adam. 
In particular, our comparisons employ various measures of latency, 
showing how the effect of circuit switching and network latency costs
affect the comparisons. 
%Additionally, we develop a scalable toy-model of the 
%quantum chemistry problem, allowing us to study the performance of RASCALS
%on large-scale problems. 
As suggested by the theory on which we elaborate in this paper, 
we find that SHOALS frequently outperforms popular methods based on 
the stochastic gradient iteration, especially in settings where latency cost is high compared with shot cost. 
Our algorithm and numerical experiments highlight the importance of making more efficient use of limited quantum
resources, allowing for further exploration of more interesting problems.

\subsection{Variational Quantum Eigensolver}

In the VQE, a standard 
example of a VQA~\cite{cerezo2021variational}, we seek to find an approximation
of the lowest eigenvalue of a Hermitian matrix, $H$, which is often known as the
Hamiltonian. VQE relies on what is known as the
``variational principle,'' which states
\begin{equation}\label{var_princ}
  \langle \psi(\mathbf{\theta}) | H | \psi(\mathbf{\theta}) \rangle \ge E_0,
\end{equation}
where $E_0$ is the true value of the lowest eigenvalue (often known as the
ground state energy) and $|\psi(\mathbf{\theta})\rangle$ is any parameterizable
function (typically known as an ansatz). The variational principle,
Eq.~\eqref{var_princ}, states that an upper bound for the true ground state
energy can be obtained by measuring the energy (that is, computing $\langle \psi(\mathbf{\theta}) | H |
\psi(\mathbf{\theta}) \rangle$) for any function, $|\psi(\mathbf{\theta})\rangle$.
A flexible, expressive wavefunction $|\psi(\mathbf{\theta})\rangle$ 
and efficient, effective optimization are key to realizing high-quality approximations
of the true ground state energy. Because of their natural ability to express
quantum mechanics, quantum computers are excellent candidates for preparing
various ansatzes.

For standard quantum chemistry problems, we can write the
Hamiltonian, $H$, after preprocessing through a fermion-to-spin 
transformation such as the Jordan--Wigner transformation~\cite{jordan1993paulische},
as a sum of terms,
\begin{equation} \label{eq:hamiltonian}
  H = \sum_{j=0}^N a_j P_j,
\end{equation}
where $a_j$ 
%GAIL - the equation has a_j?
is a (typically real-valued) constant prefactor and $P_j$ is a Pauli string. 
Note that the $N$ here is often expressed as $\mathcal{O}(N_{o}^4),$ where $N_{o}$ is the 
number of orbitals used (and is often equal to the number of qubits).
Each Pauli string can be measured efficiently on a quantum
computer. The VQE problem can thus be
expressed as an optimization problem,
\begin{equation}
\label{eq:obj-braket-pre}
  \min_{\mathbf{\theta}} \sum_j^N a_j \langle \psi(\mathbf{\theta}) | P_j | \psi(\mathbf{\theta}) \rangle.
\end{equation}

Each of the terms in the sum in \eqref{eq:obj-braket-pre} is independently estimated  stochastically  by
repeatedly querying the quantum computer for some number of shots. 

For convenience of notation in discussing an optimization algorithm, we will, 
in the remainder of this paper, abuse notation and rewrite \eqref{eq:obj-braket-pre} as

\begin{equation}
\label{eq:obj-braket}
  \min_{\mathbf{\theta}} \sum_j^N a_j P_j(\theta) \equiv
  \min_{\mathbf{\theta}} f(\theta),
\end{equation}
where $P_j(\theta)$ denotes $\langle \psi(\mathbf{\theta}) | P_j | \psi(\mathbf{\theta}) \rangle$. 
By using the parameter shift rule~\cite{schuld2019evaluating},  partial derivatives 
for the types of ansatzes used in this work
can be calculated as
\begin{equation}
\label{eq:ps_grad}
    \partial_i f(\theta) = \frac{f(\theta+\frac{\pi}{2}e_i) - f(\theta-\frac{\pi}{2}e_i)}{2},
\end{equation}
where $e_i$ represents the unit vector in the direction of parameters $i$.
We utilize a Trotterized, unitary coupled cluster, singles doubles (UCCSD) ansatz~\cite{lee2018generalized,peruzzo2014variational} 
for this work, which is defined as
\begin{equation}
    |\psi(\theta)\rangle = \bigg(\prod_{i}^{d} \prod_{j} \exp{\theta_{i,j} \hat{t}_j} \bigg) |0\rangle,
\end{equation}
where $d$ represents what we term depth, which is the number of times to repeat the basic Trotterized 
UCCSD ansatz with unique parameters (similar to the $k$ parameter
in Ref.~\cite{lee2018generalized}), and $\hat{t}_j$ is either a singles 
($\hat{t}_j = \hat{a}^\dagger_k \hat{a}_l$)
or doubles 
($\hat{t}_j = \hat{a}^\dagger_k \hat{a}^\dagger_l \hat{a}_m \hat{a}_n$) cluster operator.
We note that our theory should extend to any ansatz, such as those generated in qubit coupled cluster~\cite{ryabinkin2018qubit}
or hardware-efficient ansatzes~\cite{kandala2017hardware},
and can be embedded within methods to generate more compact ansatzes, such as ADAPT-VQE~\cite{tang2021qubit} and 
unitary selective coupled cluster~\cite{fedorov2021unitary}.

\subsection{Single-Shot Estimators}
A gradient-based optimization algorithm for solving \eqref{eq:obj-braket} will require the ability to compute $f(\theta)$ and $\nabla f(\theta)$.
However, because each term $P_j(\theta)$ in \eqref{eq:obj-braket} is in fact a random variable for which a shot execution on a quantum computer gives an estimate valued in $\{-1,1\}$, we cannot directly compute $f(\theta)$. 
Instead, we obtain what we call a \emph{single-shot estimator} $f(\theta;\xi)$ of $f(\theta)$, which is computed by taking one observation of each Pauli string $P_j(\theta)$ and returning the linear combination specified by the prefactors $\{a_j\}$.  
We note that although we call $f(\theta;\xi)$ a \emph{single-shot} estimator, it does not cost one shot to obtain a single-shot estimator; rather, a single-shot estimator costs one shot per circuit necessary to evaluate every Pauli string, which is typically $O(N_q^4)$, where $N_q$ is the number of qubits used. Using operator grouping, one potentially can reduce this number to $O(N_q^3)$~\cite{gokhale2020n}. 
The random variable $\xi$ in $f(\theta;\xi)$ is intended to encapsulate all the randomness inherent in estimating $f(\theta)$ with a quantum computer. 

Analogously, using the parameter shift gradient, Eq.~\eqref{eq:ps_grad}, we can compute a single-shot estimator $g_i(\theta;\xi)$ of each partial derivative $\partial_i f(\theta)$. 
When we assemble the single-shot estimators of partial derivatives into a full gradient vector, we denote this latter quantity $g(\theta;\xi)$.

\section{Shot-Adaptive Line Search}
 
 Our method is fundamentally based on a stochastic line search method proposed over several papers \cite{paquette2020stochastic,berahas2021global} but is closest to one  referred to as Adaptive Linesearch with Oracle Estimations, or ALOE \cite{jinscheinbergxie}.
 To disambiguate from the use of the word ``oracle" in quantum computing, 
 we instead use the name SHOt-Adaptive Line Search, or SHOALS. 
We describe this method in Algorithm~\ref{alg:sls} and discuss key elements of the algorithm. 
 
 \begin{algorithm}
%\SetKw{true}{true}
%\SetKw{break}{break}
\caption{SHOt-Adaptive Line Search (SHOALS) \label{alg:sls}}
\textbf{Initialization:} Choose constants $\gamma>1,c\in(0,1),$ and $\alpha_{\max}>0$. \\
Choose accuracy level $\epsilon_f> 0$.\\
\textbf{Input:} initial point $\thetab^0\in\R^n$, initial step size $\alpha_0>0$. \\
\For{$k = 0,1,2,\ldots$}{
	Compute an estimate $g^k$ of $\nabla f(\thetab^k)$ (in practice; $g^k$ will be assembled coordinate-wise from averages of $N_{g_i,k}$ samples of $g_i(\thetab^k;\xi)$; see \eqref{eq:ngik}). \\
	Set a trial point $\sba^k\gets \thetab^k - \alpha_kg^k$. \\
	Compute function estimates $f^k_0$ and $f^k_\sba$ of $f(\thetab^k)$ and $f(\sba^k)$, respectively (in practice, $f^k_0$ and $f^k_s$ will be computed as averages of $N_{f,k}$ samples of $f(\thetab^k;\xi)$ and $f(\sba^k;\xi)$; see \eqref{eq:nfk}). \\ \label{line:epsf}
	\eIf{$f^k_\sba \leq f^k_0 - c\alpha_k\|g^k\|^2 + 2\epsilon_f$ \label{line:suffdec}}
	{
	$\thetab^{k+1}\gets \sba^k$\\
	$\alpha_{k+1}\gets \min\{\alpha_{\max},\gamma\alpha_k\}$\\ \label{line:stepsize_grow} 
	}
	{%\Else{
	$\thetab^{k+1}\gets \thetab^k$\\
	$\alpha_{k+1}\gets\gamma^{-1}\alpha_k$\\
	}
	%$k\gets k+1$
	}% end for
\end{algorithm}

We note that Algorithm~\ref{alg:sls} would be a classical line search method with an Armijo line search with the notable differences that 
\begin{itemize}
\item \lineref{stepsize_grow} would be replaced with $\alpha_k\gets 1$; 
\item the estimates $g^k, f_0^k,$ and $f_s^k$ would be replaced with the respective true values $\nabla f(\thetab^k), f(\thetab^k)$, and $f(\sba^k)$; and
\item $\epsilon_f$ would always be initialized to $0$. 
\end{itemize}

Under particular conditions on the estimates $g^k, f_0^k,$ and $f_s^k$,
as well as the determination of $\epsilon_f$ in \lineref{epsf}, Jin et al.~\cite{jinscheinbergxie} demonstrate a special kind of convergence result concerning the iterates of Algorithm~\ref{alg:sls}. 
We now elaborate on these conditions, which will be enforced in SHOALS. 

\subsection{Gradient Estimation}\label{sec:gradients} 
In our setting, a directional derivative estimate $g_i^k \approx \partial_i f(\theta^k)$ (see \eqref{eq:ps_grad}) is computed by averaging a number $N_{g_i,k}$ of single-shot estimates of the directional derivative, $g_i^k(\thetab^k;\xi)$. 
As a result, the gradient estimate $g^k$ can be written $g^k(\thetab^k; \xi^k)$, where $\xi^k$ denotes a realization of the random variable $\Xi^k$ representing all possible shot sequences that could be observed when obtaining $N_{g_i,k}$ many single-shot estimates of each $\partial_i f(\thetab^k)$, and concatenating the estimates into a $n$-dimensional gradient vector. 
We stress that in our notation, whenever $\xi$ has a superscript $(k)$, there is an implied (set of) sample size(s) $N_{g_i,k}$ to distinguish it from the atomic random variable $\xi$ involved in the single-shot estimator.

\change{We give reasoning in the appendix, but we ultimately require, for each $i=1,\dots,n$
\begin{equation}\label{eq:ngik}
N_{g_i,k} = 
 \left\lceil \displaystyle\frac{\mathbb{V}\left[ g_i(\thetab^k;\xi)\right]}{p\left(\max\{L_i\alpha_kg_i(\thetab^k;\xi^k),\epsilon_g\}\right)^2} \right\rceil,
 \end{equation}
for some $p\in(0,\frac{1}{2}^n)$ and some $\epsilon_g > 0$.
In \eqref{eq:ngik}, $L_i$ denotes a global Lipschitz constant of $\partial_i f(\theta)$
and $\mathbb{V}\left[g_i(\thetab^k;\xi)\right]$ denotes the variance of the single-shot estimator $g_i(\thetab^k;\xi)$.
Loosely speaking, enforcing \eqref{eq:ngik} will guarantee that, with high probability $(1-p)$, the error in the estimate $g_i(\theta^k;\xi^k)$ of $\partial_i f(\theta^k)$ will be bounded in such a way that that it does not interfere with the gradient descent dynamics.}

As in \cite{kubler2020adaptive}, we note that we can yield an upper bound on each $L_i$ 
due to the parameter shift rule and the expression of the cost function as a prefactor-weighted sum of Pauli matrices. 
In particular, 
%\textbf{\color{red} Matt O., again want some help with the parameter shift notation. What is a clean way to express this second derivative without bogging down the notation? As written, this is totally wrong, because this is the expression for $f(\theta)$}, 
given an expression for the non-mixed partial second derivative $\partial_i\partial_i f(\thetab^k)$, we can yield an upper bound on each $L_i$ by the sum of the absolute values of prefactors
obtained after applying a parameter shift \eqref{eq:ps_grad} twice.

To make practical use of \eqref{eq:ngik}, it remains to handle the unknown variance term in the numerator. 
The variance of the estimator $g_i(\thetab^k;\xi)$ is unknowable because computing a population variance requires knowing the expectation $\nabla f(\thetab^k)$. 
In our implementation we simply compute the sample variance of the sample of size $N_{g,k}$ before the end of the $k$th iteration to use as an estimate of the variance in the $(k+1)$st iteration. 
It is theoretically reasonable to believe that the variance $\mathbb{V}\left[ g(\thetab^k;\xi)\right]$ should not change too rapidly in $\thetab^k$, justifying this choice.
We have also verified this assumption empirically through  experiments.
To summarize, our practical sample size is 
\begin{equation}\label{eq:practical_ngik}
N_{g_i,k} = 
 \left\lceil \displaystyle\frac{s_{g_i,k}^2}{p\left(\max\{L_i\alpha_kg_i(\thetab^k;\xi^k),\epsilon_g\}\right)^2} \right\rceil,
 \end{equation}
 where $s_{g_i,k}^2$ is the sample variance estimate of $g_i(\theta^k;\xi)$ obtained in the previous iteration. 

\subsection{Function Value Estimation}\label{sec:fvals} 
Similarly to the gradient estimation, we average a number $N_{f,k}$ of single-shot estimates of $f(\thetab^k)$ and $f(\sba^k)$ to yield estimates $f(\thetab^k;\xi^k)$ and $f(\sba^k;\xi^k)$. 
\begin{equation}\label{eq:nfk}
N_{f,k} = \left\lceil\frac{\mathbb{V}\left[f(\theta^k;\xi)\right]}{p\epsilon_f^2}\right\rceil,
\end{equation}
where $\epsilon_f>0$ is a parameter, and $p\in(\frac{1}{2},1)$.
\change{Formal reasoning is given in the appendix, but intuitively, averaging $N_{f,k}$ many samples 
guarantees with high probability that the error incurred by $f(\theta^k;\xi^k)$ is bounded by a 
constant $\epsilon_f$ that dictates the accuracy to which we would intend to reduce the optimality gap of our final solution.}

In our practical implementation, to avoid oversampling, we additionally employ a criterion resembling past work and the criterion in \eqref{eq:practical_ngik} (see \cite{storm, paquette2020stochastic}) and compute
\begin{equation}\label{eq:practical_nfk}
N_{f,k} = \min\left\{ \left\lceil\displaystyle\frac{s^2_{f,k}}{p(\alpha_k^2\|g(\thetab^{k};\xi^{k})\|^2)^2}\right\rceil, \left\lceil\frac{s^2_{f,k}}{\epsilon_f^2}\right\rceil\right\},
\end{equation}
where $s^2_{f,k}$ is the sample variance estimate of $\mathbb{V}[f(\thetab^k;\xi)]$, which would have been computed in the previous iteration. 

\subsection{Theoretical Results concerning SHOALS}\label{sec:theory}
In \cite{jinscheinbergxie}, a more formal statement of the following result is proven. 
\begin{theorem}\label{bigtheorem}
Suppose $f$ is bounded below. that is, there exists $f_*=f(\thetab^*)$ so that $f_*\leq f(\thetab)$ for all $\thetab$, and suppose $\nabla f(\theta)$ is Lipschitz continuous with constant $L_g$. 
Fix $\epsilon_g>0$ and $\epsilon_f>0$. 
Suppose the estimators $g(\thetab^k;\xi^k)$ and $\{f(\thetab^k;\xi^k),f(\sba^k;\xi^k)\}$ employed in each iteration of Algorithm~\ref{alg:sls} satisfy Condition~\ref{cond:g} and Condition~\ref{cond:f} with $\epsilon_g$ and $\epsilon_f$, respectively.
Let
$$\epsilon \geq 4\max\{\epsilon_g, (1+L_g\alpha_{\max})\sqrt{3L_g\epsilon_f}\}.$$
Then, with high probability, 
the number of iterations of Algorithm~\ref{alg:sls}, $T_\epsilon$, required to attain
\begin{equation}
\label{eq:good_bounds}
    \|\nabla f(\thetab^{T_\epsilon})\| < \epsilon
\end{equation}
satisfies 
\begin{equation}
\label{eq:good_iter_complexity}
T_{\epsilon}\in\mathcal{O}\left(\displaystyle\frac{L_g^2\big(f(\thetab^0)-f(\thetab^*)\big)}{\epsilon^2}\right).
\end{equation}
The probability in this result is with respect to the $\sigma$-algebra of the realizations of $\xi^k$ observed over the run of the algorithm. 
\end{theorem}
We remark on some important aspects of Theorem~\ref{bigtheorem}. 
First, as indicated by the big-$\mathcal{O}$ notation, this result provides a \emph{worst-case} guarantee. 
Virtually never in practice is a stopping time on the order of $1/\epsilon^2$ many iterations realized.
However, the result of Theorem~\ref{bigtheorem} recovers 
up to small constants the worst-case iteration complexity result for \emph{deterministic} gradient descent algorithms, which is known to be tight \cite{cartis2010complexity}. 

\subsection{Comparison with Stochastic Gradient Methods}\label{sec:comparisontheory} 
The result in Theorem~\ref{bigtheorem} should be contrasted with known results for the worst-case performance of stochastic gradient methods.
We provide a prototypical stochastic gradient method in \algref{sgd}.
\begin{algorithm}
\caption{Generic Stochastic Gradient Method \label{alg:sgd}}
\textbf{Initialization:} Choose step size $\alpha$, initial point $\thetab^0\in\R^n$ \\
\For{$k = 0,1,2,\ldots$}{
	Compute an estimate $g^k$ of $\nabla f(\thetab^k)$. \\
	$\thetab^{k+1}\gets \thetab^k-\alpha g^k$. \\
	}% end for
\end{algorithm}
Results in Corollary 2.2 of Ref.~\cite{ghadimilan} and Theorem 4.8 of Ref.~\cite{bottoucurtisnocedal} show essentially that in order to guarantee 
\begin{equation}\label{eq:bad_bounds} 
\mathbb{E}\left[\displaystyle\frac{1}{T_{\epsilon}}\displaystyle\sum_{k=0}^{T_{\epsilon}} \|\nabla f(\thetab^k)\| \right] \leq\epsilon,
%\in \mathcal{O}\left(\Sigma +  \displaystyle\frac{f(\thetab^0)-f_*}{T}\right),
\end{equation}
one must run a stochastic gradient method for 
\begin{equation}
\label{eq:bad_iter_complexity}
T_\epsilon\in\mathcal{O}\left(\displaystyle\frac{L_g(f(\thetab^0)-f(\thetab^*))}{\epsilon^4}\right)
\end{equation}
many iterations, provided the step size $\alpha$ is chosen sufficiently small, typically satisfying a bound like  $\alpha\leq \frac{1}{L_g}$.
For ease of presentation, we deliberately ignored in \eqref{eq:bad_bounds} and \eqref{eq:bad_iter_complexity} an additional dependence on the variance of $g^k$, which decreases linearly with the sample size or ``batch size" $b$.
This is a significant oversight because in the worst case one requires $b\in\Theta(T_{\epsilon})$, where $T_{\epsilon}$ is as in \eqref{eq:bad_iter_complexity} in order for theoretical results such as \eqref{eq:bad_bounds} to hold. 
This is certainly  concerning in settings where accuracy matters (e.g., chemical accuracy in the problems we are considering), but in many machine learning applications it is considerably less concerning. 
In light of this intentional oversight, we remark that the worst-case results we are presenting for SG methods are, in fact, erring on the side of optimism. 

The bound in \eqref{eq:bad_bounds} can be interpreted as saying that, given a fixed number of iterations of $T$, the \emph{expected value} (over the $\sigma$-algebra of all sources of randomness) of the average value (averaged over $k=1,\dots,T$) of $\|\nabla f(\thetab^k)\|$ is bounded by $\epsilon$.
As a corollary, one can show that if one selects uniformly at random $R\in\{1,\dots,T_{\epsilon}\}$, then
\begin{equation}\label{eq:bad_bounds_random} 
\mathbb{E}\left[ \|\nabla f(\thetab^R)\| \right] \leq\epsilon.
%\in \mathcal{O}\left(\Sigma +  \displaystyle\frac{f(\thetab^0)-f_*}{T}\right),
\end{equation}
These results have been shown to be tight in the general case~\cite{drori2020complexity} and improvable to $\mathcal{O}\left(\frac{1}{\epsilon^{3.5}}\right)$
in special cases \cite{fang2019sharp}. 
Popular variants of SG algorithms like \algref{sgd}, such as Adam, have been shown to have virtually the same type and quality of error bounds as that in \eqref{eq:bad_bounds}; see Theorem 1 of Ref.~\cite{reddi2018adaptive}. 
Methods such as iCANS~\cite{kubler2020adaptive} are also based on the stochastic gradient iteration and thus are susceptible to the same worst-case performance. 

The trade-off between a $\mathcal{O}\left(\frac{1}{\epsilon^2}\right)$ worst-case complexity for algorithms of the class in Algorithm~\ref{alg:sls} and the $\mathcal{O}\left(\frac{1}{\epsilon^4}\right)$ worst-case complexity for algorithms of the class in \algref{sgd} is that the estimates in Algorithm~\ref{alg:sls} require more samples (or, in the context of our parameter shift problems, shots) per iteration to satisfy Condition~\ref{cond:g} and Condition~\ref{cond:f}. 
The appropriate metric for comparing these algorithms is measured by the \emph{total work}, namely, a function of latency and shot acquisition times. 

We consider the following simplified setting.
We note that similar analyses (see \cite{bottou2007tradeoffs} and \cite[Section 4.4]{bottoucurtisnocedal})   have been done  to explain the apparent empirical dominance of stochastic gradient methods in empirical risk minimization settings, but the conclusions we will draw are very different because the quantum computing setting is very different from general machine learning settings in terms of overhead costs. 
We denote the shot acquisition time by $c_1$, the latency time of circuit switching by $c_2$, and the communication cost time by $c_3$. 
\change{All times vary from architecture to architecture. 
Shot acquisition time includes the time to reset the circuit to the initial state, 
the time to run the circuit, and  the time to record a single measurement, which can 
from a around 100 $\mu s$ in superconducting qubits~\cite{karalekas2020quantum}
and around 200 ms in neutral atom systems~\cite{briegel2000quantum}. 
Circuit switching time includes the time needed to compile a circuit and then to load it
on to the quantum computer's control system.
For superconducting qubits, compilation can take 200 ms, while interacting with the 
arbitrary waveform generator can take around 25 ms~\cite{karalekas2020quantum}.
Communication cost includes the time needed to communicate between the computer running the 
classical optimization algorithm and the actual quantum computer. This can range from nearly 
zero when the two devices are in the same room to many seconds when instructions need to be
sent over the Internet~\cite{sung2020using}.}
We note that every iteration of \algref{sls} requires two communications between the quantum and classical devices: one communication to 
obtain $g(\thetab^k;\xi^k)$ and a second communication to obtain estimates $f(\thetab^k;\xi^k)$ and $f(\sba^k,\xi^k)$.
Every iteration of \algref{sgd} requires only one communication to obtain $g(\thetab^k;\xi^k)$. 
Thus, denoting by $t^f_{\circ}$ and $t^g_{\circ}$ the number of circuits required to evaluate the one-shot estimators $f(\thetab^k;\xi)$ and $g(\thetab^k;\xi)$, respectively, we can summarize the overhead \emph{per-iteration latency cost} of each iteration as in Table~\ref{table:costs}. 

\begin{table*}[t]

\begin{tabular}{lcrccc}
     & Iterations   & $\times$ & $($Per-Iteration Latency Cost    & $+$ & Per-Iteration Shots Cost$)$\\\hline
SHOALS & $\mathcal{O}\left(\displaystyle\frac{L_g^2(f(\thetab^0)-f(\thetab^*))}{\epsilon^2}\right)$ & & $c_2 (2t^f_{\circ} +  t^g_{\circ}) + 2c_3$         & & $c_1\left(\displaystyle\frac{t^g_{\circ}}{\epsilon^2} + \frac{2t^f_{\circ}}{\epsilon^4}\right)$ \\
$=$ & \multicolumn{5}{l}{$\mathcal{O}\left(\displaystyle\frac{L_g^2(f(\thetab^0)-f(\thetab^*))\left[c_2(2t_{\circ}^f + t_{\circ}^g) + 2c_3\right]}{\epsilon^2} + 
\displaystyle\frac{L_g^2(f(\thetab^0)-f(\thetab^*))c_1t_{\circ}^g}{\epsilon^4} + 
\displaystyle\frac{2c_1t_{\circ}^fL_g^2(f(\thetab^0)-f(\thetab^*))}{\epsilon^6}\right)$}\\\hline 
SG & $\mathcal{O}\left(\displaystyle\frac{L_g(f(\thetab^0)-f(\thetab^*))}{\epsilon^4}\right)$ & & $c_2 t^g_{\circ} + c_3$ & & $bc_1t^g_{\circ}$\\
$=$ & \multicolumn{5}{l}{$\mathcal{O}\left( \displaystyle\frac{\left[bc_1t_{\circ}^g + c_2t_{\circ}^g + c_3\right]L_g(f(\thetab^0)-f(\thetab^*))}{\epsilon^4}\right)$} \\ \hline
\end{tabular}
\caption{\label{table:costs} Summary of total worst-case costs of SHOALS and a stochastic gradient (SG) method to achieve $\epsilon$-stationarity. The total worst-case cost is computed as the number of iterations to achieve $\epsilon$-stationarity multiplied by the sum of the per iteration costs. We recall that $t^f_{\circ}$ and $t^g_{\circ}$ denote the number of circuits needed to compute $f(\thetab^k;\xi)$ and $g(\thetab^k;\xi)$, respectively. For an SG method, we denote a constant batch size parameter by $b$.}
\end{table*}

Now, denote by $t^f_s$ and $t^g_s$ the number of shots requested per iteration. 
By Theorem~\ref{bigtheorem}, if we choose $\epsilon = \epsilon_g < 1$
and $\epsilon_f = \epsilon^2$, then in the worst case, and up to constants,
\algref{sls} will require on each iteration $\frac{1}{\epsilon^2}$ samples of $g(\thetab^k;\xi)$
and $\frac{1}{\epsilon^4}$ samples of $f(\thetab^k;\xi)$, yielding
$t^g_s \leq \frac{t^g_{\circ}}{\epsilon^2}$ and $t^f_s \leq \frac{2t^f_{\circ}}{\epsilon^4}.$
On the other hand, each iteration of \algref{sgd} will  require only a constant number of samples, $b$, of $g(\thetab^k;\xi)$.
That is, without loss of generality, and up to constants, $t^g_s = bt^g_{\circ}$.
All of these costs are summarized in Table~\ref{table:costs}. 

We note that SHOALS  incurs only the communication cost, $c_3$, on the order of $1/\epsilon^2$ many times, whereas the cost is incurred on the order of $1/\epsilon^4$ many times for SG methods. 
Additionally, the latency switching cost $c_2$ is  present only on the $1/\epsilon^2$ term for SHOALS, whereas it appears on the order of $1/\epsilon^4$ many times in SG methods. 
On the other hand, we note that SHOALS has an unfortunate $1/\epsilon^6$ dependence; however, if $c_1$ is small---that is, if the cost of shot acquisition is small relative to the total latency and communication costs $c_3 + c_2(t^f_{\circ} + t^g_{\circ})$)---then the $1/\epsilon^6$ term may not dominate. 

To quantify this trade-off a bit more, 
we make simplifying assumptions that $t_{\circ}^g = 2n t_{\circ}^f$, 
where $n$ is the dimension of the parameter vector $\thetab$, which is likely correct up to constants in the parameter shift setting~\cite{schuld2019evaluating}, 
and that the batchsize employed by the SG method is $b=1$, which is correct up to constants in practice.
Then, in this setting and per Table~\ref{table:costs},
%and provided $\epsilon < \sqrt{2}/2$, 
one can derive that
a sufficient condition for the 
total cost of SHOALS being less than the total cost of an SG method is the satisfaction of 
\begin{equation}\label{eq:simple_rule}
c_1 
%< \displaystyle\frac{\epsilon^2(2nt_{\circ}^fc_2 + c_3)}{2t_{\circ}^f((L_g-1)n\epsilon^2 + L_g)}
< \displaystyle\frac{\epsilon^2(2nt_{\circ}^fc_2 + c_3)}{2t_{\circ}^fL_g},
\end{equation}
still assuming without loss of generality that $L_g > 1$.
That is, roughly speaking, if the cost of obtaining a single shot is smaller by a factor of $\epsilon^2$ than the overhead latency cost of one iteration of an SG method, normalized by both the number of circuits needed to evaluate $f(\thetab^k;\xi)$ and the gradient Lipschitz constant, then SHOALS is expected to yield better worst-case guarantees in terms of total costs than an SG method does. 
Thinking to the future of the deployment of VQAs, as $n$ becomes large,
$L_g$ ought to increase linearly in $N$ in \eqref{eq:obj-braket} and hence linearly in $n$; therefore,
\eqref{eq:simple_rule} asympotically amounts to 
$c_1 < \epsilon^2c_2$. 

In our experiments we will examine this trade-off empirically; but to summarize, these theoretical insights suggest that, asymptotically, when shot acquisition time is less than a factor of $\epsilon^2$ times the latency switching time, we expect SHOALS to outperform methods based on the iteration in Algorithm~\ref{alg:sgd}. 

\section{Numerical Results}
We note that, similar to stochastic gradient methods, SHOALS is remarkably simple to implement. 
It has also been documented in past work that adaptive methods like SHOALS are far less sensitive to changes in hyperparameters (such as step sizes) than stochastic gradient methods are, virtually eliminating the need for tuning hyperparameters. 
In our implementation of SHOALS, we chose $\gamma = 2, c = 0.2, \alpha_{\max} = 1$, and $\alpha_0 = 1$.

 In our implementation, when computing sample sizes via \eqref{eq:practical_ngik} and \eqref{eq:practical_nfk} we set $p=0.1$. 
 Motivated by Theorem~\ref{bigtheorem}, we ignore constant terms and set $\epsilon_g = \sqrt{\epsilon_f}$, where in turn
 motivated by chemical accuracy \cite{helgaker2014molecular}, we set $\epsilon_f = 0.0016$. 
 We note that because $\epsilon\in\Theta(\epsilon_g)$ per Theorem~\ref{bigtheorem}, this essentially means that the factor of $\epsilon^2$ that played a critical role in the end of our discussion in Section~\ref{sec:comparisontheory} is effectively $\epsilon_f$.  
 \begin{figure*}[t]
    \centering
    \includegraphics[width=.999\textwidth]{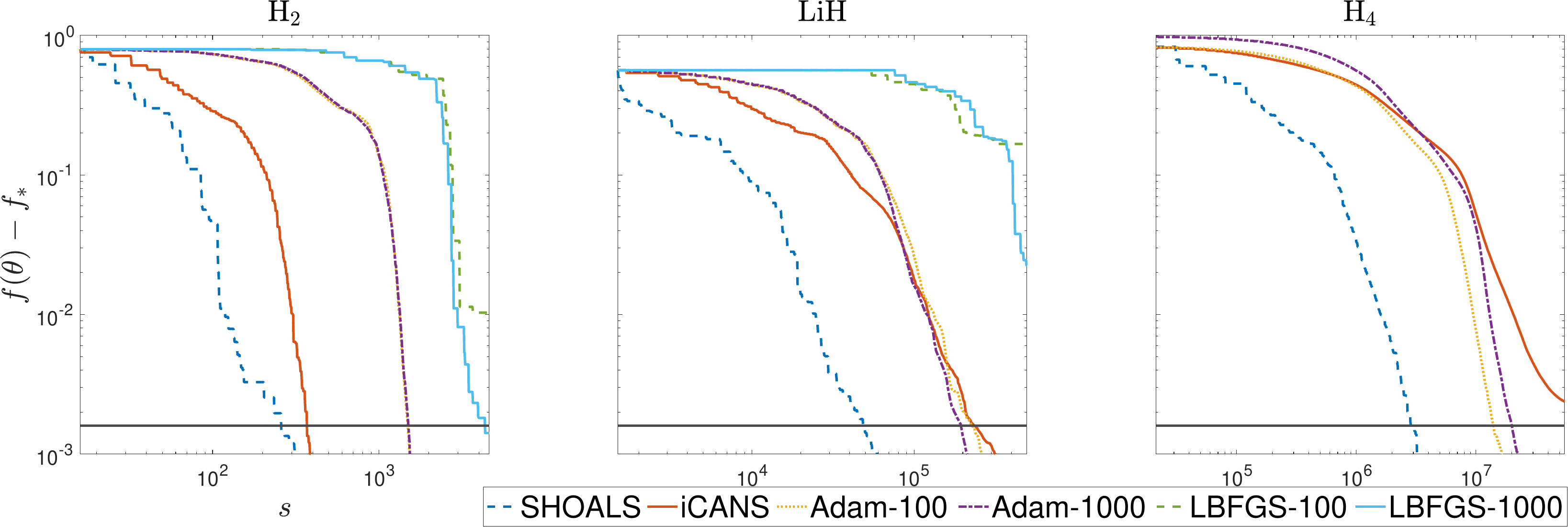}
    \caption{Comparison of SHOALS, iCANS, Adam, and LBFGS. Numbers next to Adam and LBFGS denote the shot size employed for each circuit. The $x$-axis shows the cost of the optimization in time in the simulated superconducting environment, while the $y$-axis shows the distance of the value of $f(\thetab)$ to the known global minimum $f_*$ after the corresponding number of budget units is spent on an optimizer. The solid horizontal line denotes chemical accuracy. For each solver, we show median performance over 30 runs with randomly generated initial points $\thetab^0$ (common to all solvers) with the line indicated in the legend. We remark again that LBFGS made negligible progress on H$_4$, and hence it is omitted  from the rightmost figure. \label{fig:median_trajectories}}
\end{figure*}

We first consider a hypothetical, forward-looking, superconducting hardware environment outlined in Ref.~\cite{sung2020using}; in particular, we set $c_1 = 1.0\times 10^{-5}$ s, $c_2 = 0.1$ s, and $c_3 = 4.0$ s. 
 Before showing any results, we remark that in this setting, $c_1 < \epsilon_f c_2$; and so, from our discussion in Section~\ref{sec:comparisontheory}, the theory suggests that in worst-case settings and for larger $n$, we expect SHOALS to outperform all of the stochastic gradient methods. 
Using these hardware times, we will display in our figures the total simulated time on the $x$-axis, where $t^f_s$ and $t^g_s$ (the numbers of shots requested by a method per iteration) are the actual values requested in the simulations. 

We consider three quantum chemistry problems, for which we give relevant statistics in Table~\ref{table:problems}.
For all molecules we use parity fermion-to-spin mapping with two-qubit reduction~\cite{bravyi2017tapering} to reduce the number of qubits,
as implemented in Qiskit~\cite{Qiskit}.
We simulate H$_2$ with two qubits at an equilibrium distance of 0.74~$\AA$. For LiH, we use an
equilibrium distance of 1.595~$\AA$, and we freeze core orbitals (as in~\cite{kandala2017hardware}),
leading to only four qubits. We use a square geometry for H$_4$, with each side having equilibrium length 1.23~$\AA$~\cite{lee2018generalized},
and comprising six qubits. 

\begin{figure*}[t]
    \centering
    \includegraphics[width=.999\textwidth]{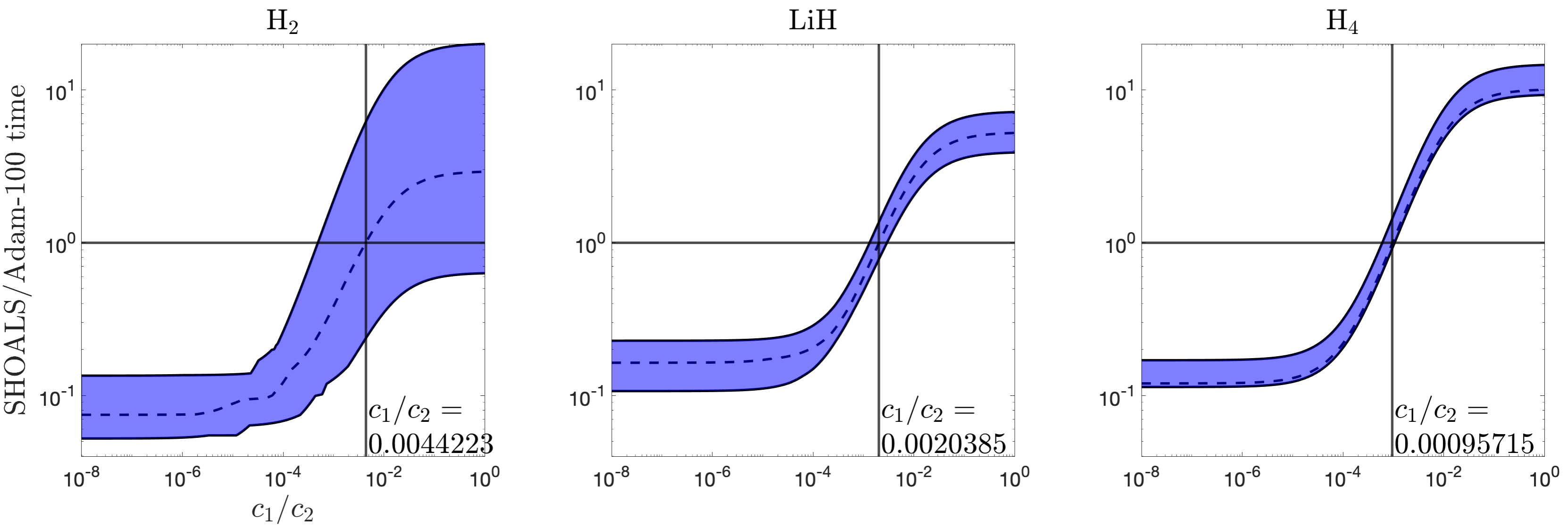}
    \caption{For each of the tested molecules, we display the ratio of the wall-clock time needed for SHOALS to achieve chemical accuracy over the wall-clock time needed for Adam-100 to achieve chemical accuracy. 
    The dashed line denotes the median value of the ratio, while the transparent band shows the $25$th through $75$th percentiles of the ratio. 
    We also display in each plot the value of $c_1/c_2$ where the median ratio equals 1, that is, where the wall-clock time of the two solvers is equal. 
    \label{fig:varyc1c2}}
\end{figure*}
\begin{table}[]
\begin{tabular}{llll}
Ansatz       & n & $t_{\circ}^f$ & $t_{\circ}^g$ \\ \hline
H$_2$, UCCSD, Depth 1  & 3 & 5       & 40      \\
%$LiH$, UCCSD, Depth 1 & 8 & 52      & 4160  \\
LiH, UCCSD, Depth 2 & 16 & 104      & 8320  \\
H$_4$, UCCSD, Depth 2 & 70 & 85 & 106080 \\ 
\end{tabular}
\caption{\label{table:problems} Relevant statistics for tested problems. As in preceding sections, $n$ denotes the number of parameters in the ansatz, $t_{\circ}^f$ denotes the number of circuits that must be evaluated to obtain an evalution of $f(\thetab^k;\xi)$, and $t_{\circ}^g$ denotes the number of circuits that must be evaluated to obtain an evaluation of $g(\thetab^k; \xi)$.}
\end{table}

\subsection{Comparison of SHOALS with Other Solvers} 
We first compare the performance of SHOALS with two stochastic gradient methods: an implementation of iCANS and an implementation of Adam. For iCANS, we used the same hyperparameter settings described in~\cite{kubler2020adaptive}, with a minimum sample size set equal to 30, and  implemented what the authors referred to as ``iCANS1." 
For Adam, we employed a constant step size of $1.0/L_g$, momentum hyperparameters of  $0.9$ and $0.999$, and a denominator offset hyperparameter of $10^{-8}$. 
These settings are standard in practice.
We experimented with a batch size (in the quantum setting, average of a number of single-shot estimators $g(\thetab^k;\xi)$) of both 100 and 1000.
We remark that one could tune all of the hyperparameters of both iCANS and Adam to yield variations in practical performance, but one intention of these experiments is to demonstrate that \emph{standard} settings of SHOALS' hyperparameters yield relatively good performance compared with \emph{standard} settings of other solvers. 
Furthermore, we remark again that in actual NISQ settings, one would not want to expend budgets performing hyperparameter tuning. 

Additionally, we compare the performance of SHOALS with a deterministic first-order method, an implementation of L-BFGS \cite{zhu1997algorithm}.
This implementation is the standard one provided in Scipy and Qiskit and was \emph{not} designed to handle stochasticity in the function evaluation. Therefore, such a comparison is not particularly fair, since an L-BFGS optimizer has no guarantee of first-order convergence in such a setting. 
We experiment with L-BFGS employing both a batch size of 100 and 1000. 
We remark that there is ongoing work on the development of L-BFGS optimizers capable of handling stochastic estimates of function and gradient values \cite{bollapragada2018adaptive,bollapragada2018progressive}. 
In fact, Pacquette and Scheinberg~\cite{paquette2020stochastic} discuss how to incorporate approximate second-order information, such as that provided by L-BFGS techniques, into the adaptive line search framework
We leave the incorporation of such second-order information into SHOALS as future work, since it is not the focus of this current paper.

Results are illustrated in Figure~\ref{fig:median_trajectories}, and additional statistics are given in Table~\ref{table:summary}. 
We see that, in this superconducting setting, the median wall-clock time performance of SHOALS is consistently better than any of the compared stochastic gradient and L-BFGS methods. 
As expected, on all of the tested problems, the median performance of L-BFGS with a fixed batchsize fails to attain chemical accuracy. For this reason, we chose not to present that solver for the largest tested molecule, H$_4$.

In light of our  discussion in Section~\ref{sec:comparisontheory} and the inherent uncertainty in the specific timings of the superconducting setting, we additionally present in Figure~\ref{fig:varyc1c2}, for the same molecules, the median (and first and third quantiles) of the ratio of the wall-clock time needed for SHOALS to attain chemical accuracy to the wall-clock time needed for Adam-100 to attain chemical accuracy when we fix $c_3 = 0$ (i.e., we completely disregard cloud communication time) and vary the ratio of $c_1/c_2$.

As expected by our discussion in Section~\ref{sec:comparisontheory}, for each tested molecules, as $c_1/c_2$ tends to 0, the advantage of using SHOALS over a stochastic gradient method is clear. 
What merits further investigation, but requires testing significantly larger molecules, is whether  the trend that the value of $c_1/c_2$ at the breakeven point is decreasing with $n$.
One explanation would be that our predictions are based on \emph{worst-case} behavior and the optimization problems we are testing are certainly not exhibiting worst-case behavior. 
In any case, it would be foolhardy to extrapolate any trend from three datapoints; testing molecules that require orders of magnitudes more circuits to compute a single gradient will involve an extensive computational campaign, which is a subject for future work.

\section{Conclusions and Future Work}
In this paper we demonstrate a new optimization algorithm, SHOt-Adaptive Line Search
(SHOALS). We provide theoretical arguments and numerical evidence that SHOALS should 
outperform other state-of-the-art optimization algorithms when the shot acquisition
time is much less than the circuit switching time.
This is because the dynamics of the deterministic line-search method that SHOALS stochastically approximates requires fewer iterations but more shots per iteration,
to reach a given level of stationarity.
Standard stochastic gradient methods,
on the other hand, can utilize smaller numbers of shots per iteration to make expected
improvement in the cost function. 
The latency regime in which SHOALS is expected to perform better than stochastic gradient methods is realized by today's
superconducting qubit quantum computers~\cite{sung2020using}. 
However, when
the circuit switching time and shot acquisition time are closer in magnitude, 
such as in trapped ion systems~\cite{clark2021engineering},
we show, theoretically and numerically, that stochastic gradient 
methods are expected to outperform SHOALS. 
Because quantum devices have a non-negligible per-iteration
overhead (from the need to switch between many circuits), these latency considerations 
are important in determining which algorithm to use in variational quantum algorithms,
whereas in optimization of classical functions, such as those encountered in machine learning, these considerations are considerably less important. 

In future work we intend to investigate the possibility of tighter worst-case complexities for SHOALS, in terms of the wall-clock time metric, than those presented in Table~\ref{table:costs}. 
Indeed, as a simplifying assumption, we assumed that SHOALS required $\mathcal{O}(1/\epsilon^2)$ many samples of $g(\thetab^k;\xi)$ and $\mathcal{O}(1/\epsilon^4)$ many samples of $f(\thetab^k;\xi)$, $f(\sba^k;\xi)$ per iteration, but this is in fact  necessary only when $k$ is near the stopping time $T_\epsilon$. 
A more rigorous analysis would consider the \emph{total work} of SHOALS, such as in the analysis performed in Ref.~\cite{pasupathy2018sampling}.

We also intend to pursue multiple directions in the practical implementation of SHOALS.
In one direction, we are interested in the
incorporation of (approximate) second-order information into SHOALS in order to further decrease the iteration complexity and hence relax further the total accumulated per-iteration latency switching costs.
As suggested previously, we are inspired by work in 
Refs.~\cite{bollapragada2018adaptive,bollapragada2018progressive,paquette2020stochastic}.
We also intend to investigate an operator-sampling variant of SHOALS, in the style of Ref.~\cite{arrasmith2020operator}, in order to further reduce both latency and shot acquisition costs. 
In particular, we may choose on certain iterations, based on sample variance estimates, to evaluate only a subset of Pauli strings in \eqref{eq:obj-braket}, or perhaps to  evaluate only a subset of directional derivative estimates $g_i(\thetab^k;\xi^k)$ when forming a gradient estimate $g^k$. 
Such approaches would resemble doubly stochastic coordinate descent methods, which have been  studied in the optimization literature \cite{xu2015block}. 
%\change{We also plan to study the performance of SHOALS in regions of overparameterization, where being trapped in a local minima can become an issue~\cite{wierichs2020avoiding,mcclean2018barren}}

Moreover, and as suggested in our numerical experiments, we intend to conduct a larger computational campaign to look at larger molecules in order to ascertain our belief in the asymptotic performance of SHOALS. 

\begin{section}{Acknowledgements}
This work was supported in part by the U.S.\ Department of Energy (DOE), Office of Science, Office of Advanced Scientific Computing Research Applied Mathematics activity and AIDE-QC project under contract number DE-AC02-06CH11357.
\end{section}

\bibliography{papers} % Entries are in the refs.bib file

\appendix
\begin{section}{Appendix}
\change{
\subsection{Derivation of \eqref{eq:ngik}}
We require $g^k(\thetab^k;\xi^k)$ to satisfy the following property.
\begin{condition}\label{cond:g}
For some $p'\in(0,1/2)$ and for some accuracy level $\epsilon_g>0$,
\begin{align}
    \mathbb{P}_{\xi^k}\left[e_g(\thetab^k;\xi^k)\leq \max\{\epsilon_g,L_g\alpha_k\|g(\thetab^k;\xi^k)\|\right\}] \\ \nonumber
     \geq 1-p',
\end{align} 
where $e_g(\thetab^k;\xi^k)$ denotes $\|g(\thetab^k; \xi^k)-\nabla f(\thetab^k)\|$ and 
$L_g$ denotes a local Lipschitz constant of the gradient $\nabla f(\thetab^k)$. 
\end{condition} 
In other words, Condition~\ref{cond:g} requires that, with sufficiently high probability, the error in the estimator be bounded by the Lipschitz constant times $\|\sba^k-\thetab^k\|$, an upper bound on the change in the gradient size. 

Applying Chebyshev's inequality, we can guarantee a gradient estimate $g^k$ satisfying Condition~\ref{cond:g} provided we choose each $N_{g_i,k}$ according to \eqref{eq:ngik}.  
Indeed, \eqref{eq:ngik} is the number of independent observations needed to attain
\begin{align*}
\mathbb{P}_{\xi^k}\left[e_{g_i}(\theta^k;\xi^k) \leq \max\{\epsilon_g, L_i\alpha_k|g_i(\theta^k;\xi^k)|\}\right]& \\
\geq & 1-p,\\
\end{align*}
where $e_{g_i}(\theta^k;\xi^k) = |g_i(\theta^k;\xi^k) - \partial_i f(\theta^k)|$.
Then, by a union bound, with probability $1-p^n \geq 1-p'$, 
$$e_g(\theta^k;\xi^k) \leq \max\{\epsilon_g, L_g\alpha_k\|g(\theta^k;\xi^k)\|\}. $$

\subsection{Deriving $N_{f,k}$}  
We now state our condition on $f(\thetab^k;\xi^k)$.
The condition for $f(\sba^k;\xi^k)$ is analogous. 
\begin{condition}\label{cond:f}
Define the \emph{estimation error} as
$e(\thetab^k;\xi^k):=|f(\thetab^k)-f(\thetab^k;\xi^k)|$.

The estimation error is a one-sided subexponential-like random variable with mean bounded by a constant $\epsilon_f>0$. 
That is, there exist parameters $(\nu,\beta)$ such that 
\begin{align}
\mathbb{E}_{\xi^k}\left[e(\thetab^k;\xi^k)\right] \leq \epsilon_f
\label{eq:fcond1}
\end{align}
and
\begin{align}
\mathbb{E}_{\xi^k}\left[\exp\left\{\lambda\left( e(\thetab^k;\xi^k) - \mathbb{E}_{\xi^k}\left[e(\thetab^k;\xi^k)\right] \right)\right\} \right]\nonumber \\ 
\leq \exp\left( \displaystyle\frac{\lambda^2\nu^2}{2}\right),
\quad \forall \lambda\in \left[0,\frac{1}{\beta}\right]. 
\label{eq:fcond2}
\end{align}
When \eqref{eq:fcond1} and \eqref{eq:fcond2} are satisfied, we say that \emph{$e(\thetab^k;\xi^k)$ is $(\nu,\beta)$-subexponential}. 
\end{condition}

The following corollary demonstrates that, for any $\epsilon_f>0$,  a sufficiently large sample size exists such that Condition~\ref{cond:f} holds. 

\begin{corollary}\label{cor:samplesize}
For any given 
$\epsilon_f>0$
%sample size $N_{f,k}$, 
we may choose
a sample size $N_{f,k}$ such that
$$\epsilon_f = \sqrt{\displaystyle\frac{V}{N_{f,k}}}$$ 
(where $V$ is as in Proposition~\ref{prop:one}) in order to guarantee
$e(\thetab^k;\xi^k)$ is $(\nu,\beta)$-subexponential with
\begin{align}
\nu=\beta=8\exp(2)\max\left\{\displaystyle\sqrt{\frac{2\mathbb{V}_{\xi}[e(\thetab^k;\xi)]}{N_{f,k}}},2(C+\epsilon_f)\right\},
\end{align}
and
$$\mathbb{E}_{\xi^k}[e(\thetab^k;\xi^k)]\leq \epsilon_f,$$
where $C$ is as in Proposition~\ref{prop:two}. 
\end{corollary}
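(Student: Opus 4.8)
The plan is to verify the two requirements of Condition~\ref{cond:f} separately — the mean bound \eqref{eq:fcond1} and the subexponential moment bound \eqref{eq:fcond2} — by invoking the two propositions established in the Appendix and then specializing both to the sample size $N_{f,k}=\lceil V/\epsilon_f^2\rceil$, i.e. the smallest integer for which $\sqrt{V/N_{f,k}}\le\epsilon_f$. For bookkeeping I treat the defining relation $\epsilon_f=\sqrt{V/N_{f,k}}$ as exact; taking the ceiling only strengthens every inequality below.

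First, the mean bound. Proposition~\ref{prop:one} supplies the constant $V$ — essentially an upper bound on the single-shot second moment of the error, $\mathbb{V}_\xi[f(\thetab^k;\xi)]$ — together with the estimate $\mathbb{E}_{\xi^k}[e(\thetab^k;\xi^k)]\le\sqrt{V/N_{f,k}}$. This follows from Jensen's inequality ($\mathbb{E}|Y|\le\sqrt{\mathbb{E}Y^2}$) applied to the centered average of $N_{f,k}$ i.i.d.\ single-shot errors, whose second moment carries the usual $1/N_{f,k}$ factor. Choosing $N_{f,k}$ so that $\sqrt{V/N_{f,k}}=\epsilon_f$ then immediately gives $\mathbb{E}_{\xi^k}[e(\thetab^k;\xi^k)]\le\epsilon_f$, which is \eqref{eq:fcond1}, and simultaneously fixes the sample size to be used in the second half of the argument.

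Second, the subexponential bound. Proposition~\ref{prop:two} shows that the centered estimation error $e(\thetab^k;\xi^k)-\mathbb{E}_{\xi^k}[e(\thetab^k;\xi^k)]$ is $(\nu,\beta)$-subexponential, with both parameters controlled by the maximum of two contributions: a standard-deviation-type term that decays like the per-shot error standard deviation over $\sqrt{N_{f,k}}$, namely $\sqrt{2\,\mathbb{V}_\xi[e(\thetab^k;\xi)]/N_{f,k}}$, and a bounded-range-type term of order $C+\epsilon_f$, where $C$ is the uniform bound (on the objective and on the single-shot estimator) furnished by Proposition~\ref{prop:two}, so that the single-shot error together with its allowed bias stays within $O(C+\epsilon_f)$. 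The $\max$ of these two pieces, and the explicit constants $8\exp(2)$, $\sqrt 2$, and $2$, all emerge from the moment-generating-function estimate proved in Proposition~\ref{prop:two} — a Bernstein/sub-gamma bound for an average of bounded, lightly biased variables, renormalized into the $(\nu,\beta)$ convention of Condition~\ref{cond:f}. Substituting the $N_{f,k}$ fixed above into that parameter formula yields exactly the displayed expression for $\nu=\beta$.

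The substantive work lives entirely inside Propositions~\ref{prop:one} and~\ref{prop:two}; the corollary is the routine act of assembling them at a common sample size. The only point needing a moment's care — and the closest thing to an obstacle — is consistency: the $N_{f,k}$ selected to enforce the mean bound must also be admissible for the subexponential bound. This is automatic, since Proposition~\ref{prop:two} holds for \emph{every} $N_{f,k}$, with the sample size entering its parameters only through the explicit $1/N_{f,k}$ factor. I therefore do not anticipate a genuine difficulty in the corollary itself; whatever is hard is deferred to the proofs of the two propositions, namely carrying out the sub-gamma MGF estimate and pinning down the constants $V$ and $C$ from the structure of the Pauli-sum objective and the single-shot estimators.
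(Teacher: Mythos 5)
Your proposal matches the paper's (very terse) proof: the corollary is obtained by feeding the single-shot subexponential parameters $(\hat\nu,\hat\beta)=(\sqrt{2\mathbb{V}[e(\thetab^k;\xi)]},\,2(C+\epsilon_f))$ from Proposition~\ref{prop:two} into the averaging result of Proposition~\ref{prop:one} and choosing $N_{f,k}$ so that $\sqrt{V/N_{f,k}}=\epsilon_f$. The only slip is one of attribution --- the $8\exp(2)\max\{\hat\nu/\sqrt{N_{f,k}},\hat\beta\}$ formula, the $1/\sqrt{N_{f,k}}$ decay, and the mean bound all come from Proposition~\ref{prop:one} (the averaging step), whereas Proposition~\ref{prop:two} only establishes subexponentiality of the \emph{single-shot} error via the bounded-range MGF argument --- but this does not affect the validity of the assembly.
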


We now prove Corollary~\ref{cor:samplesize}. }
We first record a result that was proven in \cite[Proposition 1]{jinscheinbergxie}.

\begin{proposition}\label{prop:one}
Suppose $e(\thetab^k;\xi)$ is a $(\hat\nu,\hat\beta)$-subexponential random variable
and $\mathbb{V}_{\xi}[f(\thetab^k;\xi)]\leq V$. 
Then
\begin{align*}
    \mathbb{E}_{\xi^k}[e(\thetab^k;\xi^k)] 
\leq 
\sqrt{\displaystyle\frac{V}{N_{f,k}}}
\end{align*}
and
$e(\thetab^k;\xi^k)$ is $(\nu,\beta)$-subexponential,
with $\nu=\beta=8\exp(2)\max\left\{\displaystyle\frac{\hat\nu}{\sqrt{N_{f,k}}},\hat\beta \right\}.$
\end{proposition}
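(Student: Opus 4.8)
I would prove the two assertions separately; the mean bound is elementary, while the subexponential bound carries the content.

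For the mean bound, the plan is to write $f(\thetab^k;\xi^k)=\frac{1}{N_{f,k}}\sum_{i=1}^{N_{f,k}}f(\thetab^k;\xi_i)$ as an average of $N_{f,k}$ independent single-shot estimators, each of which is unbiased for $f(\thetab^k)$ (since each Pauli-string measurement is unbiased for $P_j(\thetab^k)$ and $f(\thetab^k;\xi)$ is the prefactor-weighted sum of these). Then $f(\thetab^k;\xi^k)$ is unbiased with variance $\mathbb{V}_{\xi}[f(\thetab^k;\xi)]/N_{f,k}\le V/N_{f,k}$, and Jensen's inequality applied to $t\mapsto\sqrt{t}$ gives
\[
\mathbb{E}_{\xi^k}[e(\thetab^k;\xi^k)]\le\sqrt{\mathbb{E}_{\xi^k}\big|f(\thetab^k)-f(\thetab^k;\xi^k)\big|^2}=\sqrt{\mathbb{V}_{\xi^k}[f(\thetab^k;\xi^k)]}\le\sqrt{V/N_{f,k}}.
\]

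For the subexponential claim, I would set $Y_i:=f(\thetab^k)-f(\thetab^k;\xi_i)$, so that the $Y_i$ are i.i.d.\ with mean $0$, $|Y_i|=e(\thetab^k;\xi_i)$ is $(\hat\nu,\hat\beta)$-subexponential, and $e(\thetab^k;\xi^k)=\big|\frac{1}{N_{f,k}}\sum_i Y_i\big|\le\frac{1}{N_{f,k}}\sum_i|Y_i|$. The steps, in order, would be: (i) use this triangle inequality together with monotonicity of $\exp$ to reduce any positive-$\lambda$ (upper-tail) estimate for $e(\thetab^k;\xi^k)$ to one for the empirical mean $\frac{1}{N_{f,k}}\sum_i|Y_i|$, which is exactly the one-sided estimate that Condition~\ref{cond:f} demands; (ii) since the $|Y_i|$ are i.i.d.\ $(\hat\nu,\hat\beta)$-subexponential, multiply the shifted single-sample MGF bounds, each evaluated at $\lambda/N_{f,k}$, to conclude that $\frac{1}{N_{f,k}}\sum_i\big(|Y_i|-\mathbb{E}|Y_i|\big)$ is $\big(\hat\nu/\sqrt{N_{f,k}},\,\hat\beta/N_{f,k}\big)$-subexponential, which is the origin of the $1/\sqrt{N_{f,k}}$ gain in the first parameter; (iii) pass to a Bernstein-type tail bound, bound $\mathbb{P}\{e(\thetab^k;\xi^k)\ge t\}$ by the corresponding tail of the empirical mean, and recenter that tail at $\mathbb{E}[e(\thetab^k;\xi^k)]$; and (iv) convert the resulting one-sided subexponential tail back into the MGF form of Condition~\ref{cond:f}. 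Each passage between tail bounds and MGF bounds costs an absolute constant, and collecting these, together with the slack introduced in step (iii) both by recentering and by coarsening the pair $(\nu,\beta)$ into the single quantity $\max\{\hat\nu/\sqrt{N_{f,k}},\hat\beta\}$, is what produces the stated $8\exp(2)$ prefactor.

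The hard part will be step (iii). The single-shot errors $|Y_i|$ are centered at $\mathbb{E}[e(\thetab^k;\xi)]$, whereas the conclusion must be centered at the strictly smaller quantity $\mathbb{E}[e(\thetab^k;\xi^k)]$ (smaller by the Jensen bound above), so one cannot transport the clean $\big(\hat\nu/\sqrt{N_{f,k}},\hat\beta/N_{f,k}\big)$-subexponential bound on $\frac{1}{N_{f,k}}\sum_i(|Y_i|-\mathbb{E}|Y_i|)$ to $e(\thetab^k;\xi^k)-\mathbb{E}[e(\thetab^k;\xi^k)]$ verbatim; the mean gap has to be absorbed into the subexponential parameters. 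Routing the argument through tail bounds is what makes this manageable: there an additive recentering only weakens the effective scale, and that weakening can be charged to $\hat\beta$ using the almost-sure boundedness of the single-shot estimator (the constant $C$ appearing in Corollary~\ref{cor:samplesize}). This is precisely why the final parameters are deliberately loose rather than the sharp $\big(\hat\nu/\sqrt{N_{f,k}},\hat\beta/N_{f,k}\big)$ that naive tensorization would suggest.
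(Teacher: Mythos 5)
First, a point of comparison that matters here: the paper does not prove Proposition~\ref{prop:one} at all. It is explicitly recorded as a result imported from \cite[Proposition 1]{jinscheinbergxie}, so there is no internal proof to measure your attempt against; your proposal can only be judged on its own terms as a reconstruction. On that basis, your argument for the mean bound is correct and is certainly the intended one: independence and unbiasedness of the single-shot estimators give $\mathbb{V}_{\xi^k}[f(\thetab^k;\xi^k)]\leq V/N_{f,k}$, and Jensen applied to $t\mapsto\sqrt{t}$ yields $\mathbb{E}_{\xi^k}[e(\thetab^k;\xi^k)]\leq\sqrt{V/N_{f,k}}$.

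For the subexponential claim, your outline (bound $e(\thetab^k;\xi^k)$ by the empirical mean of the $|Y_i|$, tensorize the one-sided MGF bounds at $\lambda/N_{f,k}$ to get the $(\hat\nu/\sqrt{N_{f,k}},\hat\beta/N_{f,k})$ parameters, pass to a Bernstein-type tail, recenter, and convert back to an MGF bound) is exactly the kind of argument that produces a constant like $8\exp(2)$, and you are right that the recentering in your step (iii) is the crux: a positive additive shift cannot be absorbed into a bound of the form $\exp(\lambda^2\nu^2/2)$ uniformly as $\lambda\to 0^{+}$, so one must detour through tails. The gap in your proposal is in how you close that step. You propose to charge the recentering gap $\mathbb{E}_{\xi}[e(\thetab^k;\xi)]-\mathbb{E}_{\xi^k}[e(\thetab^k;\xi^k)]$ to $\hat\beta$ using the almost-sure bound $|e(\thetab^k;\xi)|\leq C$, but boundedness is not among the hypotheses of Proposition~\ref{prop:one}; it enters the paper only in Proposition~\ref{prop:two}, where the particular pair $\hat\beta=2(C+\epsilon_f)$ is constructed. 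In that downstream instantiation your absorption does work (the gap is at most $C\leq\hat\beta/2\leq\max\{\hat\nu/\sqrt{N_{f,k}},\hat\beta\}$, so it only inflates the constant), and hence your argument would suffice to establish Corollary~\ref{cor:samplesize}, which is all the paper uses. But as a proof of Proposition~\ref{prop:one} at its stated level of generality, the step that controls the recentering gap using only $(\hat\nu,\hat\beta)$-subexponentiality and the variance bound is missing, and it is not obviously free; you would either need to supply it or, as the paper does, defer to the external proof.
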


Because of the form of the single-shot estimators $f(\thetab^k;\xi)$ in the parameter shift setting (a finite weighted sum of Bernoulli random variables), we have that $|e(\thetab^k;\xi)|\leq C$ deterministically,
where $C = 2\displaystyle\sum_i |a_i|.$
Thus, we can show the following result.

\begin{proposition}\label{prop:two}
$e(\thetab^k;\xi)$ is a $(\sqrt{2\mathbb{V}\left[e(\thetab^k;\xi)\right]},2(C+\epsilon_f))$-subexponential random variable. 
\end{proposition}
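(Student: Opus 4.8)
\emph{Proof sketch.} The plan is to recognize the claim as a one-sided Bernstein-type moment-generating-function estimate for an almost-surely bounded, mean-zero random variable. Abbreviate $e := e(\thetab^k;\xi)$, let $Y := e - \mathbb{E}_\xi[e]$ be its centering, and set $\beta := 2(C+\epsilon_f)$. First I would record an almost-sure bound on $Y$: the text preceding Proposition~\ref{prop:two} already gives $0 \le e \le C$ (with $C = 2\sum_i|a_i|$), so $0 \le \mathbb{E}_\xi[e] \le C$ and hence $|Y| = |e - \mathbb{E}_\xi[e]| \le C \le \beta$ almost surely; the slack built into $\beta$ (the factor $2$ and the $\epsilon_f$) is harmless and is chosen only so that the constants propagate cleanly through Proposition~\ref{prop:one} into Corollary~\ref{cor:samplesize}.

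Next I would carry out the standard moment expansion. Fix $\lambda \in [0,1/\beta]$. Because $|Y|\le\beta$ almost surely, $\sum_{m\ge0}\lambda^m|Y|^m/m! = e^{\lambda|Y|} \le e^{\lambda\beta} < \infty$, so by dominated convergence $\mathbb{E}_\xi[\exp(\lambda Y)] = \sum_{m\ge 0}\frac{\lambda^m}{m!}\mathbb{E}_\xi[Y^m]$; the $m=1$ term vanishes since $\mathbb{E}_\xi[Y]=0$, and for each $m\ge 2$ the bound $|Y|\le\beta$ gives $\mathbb{E}_\xi[Y^m] \le \mathbb{E}_\xi[|Y|^m] \le \beta^{m-2}\mathbb{E}_\xi[Y^2] = \beta^{m-2}\mathbb{V}[e]$. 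Summing the resulting series and using $1+x\le e^x$,
\begin{align*}
\mathbb{E}_\xi\!\left[\exp(\lambda Y)\right] \le 1 + \frac{\mathbb{V}[e]}{\beta^2}\bigl(e^{\lambda\beta}-1-\lambda\beta\bigr) \le \exp\!\left(\frac{\mathbb{V}[e]}{\beta^2}\bigl(e^{\lambda\beta}-1-\lambda\beta\bigr)\right).
\end{align*}

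Finally I would invoke the elementary scalar inequality $e^x - 1 - x \le x^2$ for $x\in[0,1]$ (for instance, compare the two power series term by term using $m!\ge 2^{m-1}$ for $m\ge2$ and summing a geometric series). Since $\lambda\beta\in[0,1]$ by the restriction on $\lambda$, this yields $e^{\lambda\beta}-1-\lambda\beta \le (\lambda\beta)^2$, whence $\mathbb{E}_\xi[\exp(\lambda Y)] \le \exp\!\bigl(\lambda^2\mathbb{V}[e]\bigr) = \exp\!\bigl(\tfrac12\lambda^2(\sqrt{2\mathbb{V}[e]})^2\bigr)$ for every $\lambda\in[0,1/\beta]$. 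Comparing with the definition in Condition~\ref{cond:f}, this is exactly the statement that $e(\thetab^k;\xi)$ is $(\nu,\beta)$-subexponential with $\nu = \sqrt{2\mathbb{V}[e(\thetab^k;\xi)]}$ and $\beta = 2(C+\epsilon_f)$.

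I do not expect a genuine obstacle, since this is a textbook Bernstein estimate; the only points needing care are (i) that Condition~\ref{cond:f} asks only for the \emph{one-sided} MGF bound ($\lambda\ge0$), so no control of the left tail of $Y$ is needed and the computation above suffices verbatim, and (ii) keeping the range restriction on $\lambda$ consistent with the advertised constant, i.e.\ ensuring that the $\beta$ governing the validity interval $[0,1/\beta]$ is the same constant that appears as the second entry of the subexponential pair.
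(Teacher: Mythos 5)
Your proof is correct and takes essentially the same route as the paper's: both center the error, Taylor-expand the moment generating function, bound the moments $\mathbb{E}[Y^m]$ for $m\ge 2$ by the variance times powers of the almost-sure bound, and sum the resulting series to get the advertised $(\nu,\beta)$ pair on $\lambda\in[0,1/\beta]$. The only difference is cosmetic: the paper first normalizes $Y$ by $C+\epsilon_f$ so that $|Y|\le 1$ (closing the series with a geometric sum for $\lambda\le 1/2$) and rescales at the end, whereas you carry the scale $\beta=2(C+\epsilon_f)$ through a direct Bernstein computation using $e^{x}-1-x\le x^{2}$ on $[0,1]$.
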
 
\begin{proof}
Let $Y = \displaystyle\frac{e(\thetab^k;\xi) - \mathbb{E}\left[e(\thetab^k;\xi)\right]}{C+\epsilon_f}$, and suppose (as in Condition~\ref{cond:f}) that $\mathbb{E}\left[e(\thetab^k;\xi)\right] \leq \epsilon_f$. 
By definition of variance, $\mathbb{E}\left[Y^2\right] = \frac{1}{(C+\epsilon_f)^2}\mathbb{V}\left[e(\thetab^k;\xi)\right] := V. $
Moreover, 
$$|Y| \leq \displaystyle\frac{|e(\thetab^k;\xi)| + \mathbb{E}\left[e(\thetab^k;\xi)\right]}{C+\epsilon_f} \leq 1.$$ 
Thus, bounding the moment generating function, 
\begin{align*}
\mathbb{E}\left[\exp(\lambda Y)\right] = \mathbb{E}\left[\displaystyle\sum_{k=0}^\infty \frac{\lambda^kY^k}{k!} \right] 
\leq 1 + \displaystyle\sum_{k=2}^\infty \frac{\lambda^k}{k!} V \\
\leq 1 + \left(\displaystyle\sum_{k=0}^\infty \lambda^k \right)\frac{\lambda^2V}{2}.
\end{align*}
For all $\lambda < 1$, the geometric series on the right will converge; moreover, for $\lambda \leq \frac{1}{2}$, 
$$\mathbb{E}\left[\exp(\lambda Y)\right] \leq \exp\left(\displaystyle\frac{\lambda^2(\sqrt{2V})^2}{2} \right).$$
That is, $\frac{e(\thetab^k;\xi)}{C+\epsilon_f}$ is a $(\sqrt{2V},2)$-subexponential random variable. 
The result follows.
\end{proof}

Because the variance $\mathbb{V}_{\xi}\left[f(\thetab^k;\xi)\right]$ is bounded in the parameter-shift setting, ($f(\thetab^k;\xi)$ can be expressed as a finite weighted sum of Bernoulli random variables), 
Proposition~\ref{prop:one} and Proposition~\ref{prop:two} together give us Corollary~\ref{cor:samplesize}.
\end{section}

\begin{sidewaystable*}[]
\footnotesize
\begin{tabular}{l|llll|llll|llll}
\toprule
$\mathbf{\mathrm{H_2}}$ &  \multicolumn{4}{c}{Shots} & \multicolumn{4}{c}{Switches} & 
\multicolumn{4}{c}{Communications} \\
\midrule
Solver     & $Q_{25}$    & $Q_{50}$    & $Q_{75}$    & mean     & $Q_{25}$    & $Q_{50}$    & $Q_{75}$    & mean     & $Q_{25}$    & $Q_{50}$    & $Q_{75}$    & mean     \\ \hline
SHOALS     & $4.61\times 10^5$   & $2.31\times 10^6$  & $1.42\times 10^7$ & $9.09\times 10^6$  & 400      & \textbf{472}      & \textbf{815}      & \textbf{662}      & 23       & 28       & \textbf{45}       & \textbf{36}       \\
iCANS      & $\mathbf{2.74\times 10^5}$   & $6.83\times 10^5$   & $1.44\times 10^6$  & $1.55\times 10^6$  & 1200     & 1820     & 2480     & 1965     & 30       & 46       & 62       & 49       \\
Adam-100   & $5.68\times 10^5$   & $7.34\times 10^5$   & $\mathbf{9.44\times 10^5}$   & $\mathbf{7.36\times 10^5}$   & 5720     & 7380     & 9480     & 7403     & 143      & 185      & 237      & 185      \\
Adam-1000  & $5.64\times 10^6$  & $7.12\times10^6$  & $9.00\times 10^6$  & $7.13\times 10^6$  & 5680     & 7160     & 9040     & 7173     & 142      & 179      & 226      & 179      \\
LBFGS-100  & $\infty$ & $\infty$ & $\infty$ & $\infty$ & $\infty$ & $\infty$ & $\infty$ & $\infty$ & $\infty$ & $\infty$ & $\infty$ & $\infty$ \\
LBFGS-1000 & $3.15\times 10^5$   & $\mathbf{6.53\times 10^5}$   & $\infty$ & $\infty$ & \textbf{315}      & 652      & $\infty$ & $\infty$ & \textbf{7}        & \textbf{15}       & $\infty$ & $\infty$\\
\hline
$\mathbf{\mathrm{LiH}}$ &  \multicolumn{12}{c}{} \\
\hline
SHOALS     & $7.92\times 10^8$   & $1.27\times 10^9$  & $1.69\times 10^9$ & $1.41\times 10^9$  & $\mathbf{2.41\times 10^5}$      & $\mathbf{3.47\times 10^5}$      & $\mathbf{4.66\times 10^5}$      & $\mathbf{3.72\times 10^5}$      & \textbf{79}       & \textbf{110}       & \textbf{149}       & \textbf{122}      \\
iCANS      & $2.06\times 10^9$   & $2.75\times 10^9$   & $4.16\times 10^9$  & $2.86\times 10^9$  & $1.82\times 10^7$     & $2.04\times 10^7$     & $2.59\times 10^6$     & $2.15\times 10^6$     & 219       & 245       & 311       & 259       \\
Adam-100   & $\mathbf{1.61\times 10^8}$   & $\mathbf{2.25\times 10^8}$  & $\mathbf{3.29\times 10^8}$   & $\mathbf{2.36\times 10^8}$  & $1.61\times 10^6$     & $2.25\times 10^6$     & $3.29\times 10^6$     & $2.36\times 10^6$     & 193      & 271      & 396      & 283      \\
Adam-1000  & $1.11 \times 10^9$  & $1.77\times 10^9$  & $2.38\times 10^9$  & $1.83\times 10^9$  & $1.11\times 10^6$     & $1.77\times 10^6$     & $2.38\times 10^6$     & $1.83\times 10^6$     & 134     & 213      & 286      & 219      \\
\hline
$\mathbf{\mathrm{H_4}}$ &  \multicolumn{12}{c}{} \\
\hline
SHOALS     & $1.24\times 10^{11}$   & $1.36\times 10^{11}$  & $1.46\times 10^{11}$ & $1.35\times 10^{11}$  & $\mathbf{1.51\times 10^7}$      & $\mathbf{1.66\times 10^7}$      & $\mathbf{1.71\times 10^7}$     & $\mathbf{1.61\times 10^7}$      & \textbf{482}       & \textbf{525}       & \textbf{531}       & \textbf{508}      \\
iCANS      & $9.39\times 10^{11}$   & $1.04\times 10^{12}$   & $1.14\times 10^{12}$  & $1.04\times 10^{12}$  & $3.27\times 10^8$     & $3.43\times 10^8$     & $3.59\times 10^8$     & $3.43\times 10^8$     & 3086       & 3237       & 3388       & 3237       \\
Adam-100   & $\mathbf{1.29\times 10^{10}}$   & $\mathbf{1.39\times 10^{10}}$   & $\mathbf{1.46\times 10^{10}}$   & $\mathbf{1.38\times 10^{10}}$   & $1.29\times 10^8$    & $1.39\times 10^8$     & $1.46\times 10^8$     & $1.38\times 10^8$     & 1229      & 1309      & 1380      & 1301      \\
Adam-1000  & $1.09 \times 10^{11}$  & $1.32\times 10^{11}$  & $1.75\times 10^{11}$  & $1.41\times 10^{11}$  & $1.09\times 10^8$     & $1.32\times 10^8$     & $1.75\times 10^8$     & $1.41\times 10^8$     & 1032     & 1240      & 1649      & 1329      \\
\bottomrule
\end{tabular}
\caption{For each tested molecule and for each solver, we display the $25^{th}$, $50^{th}$, and $75^{th}$ percentiles ($Q_{25},Q_{50},Q_{75}$, respectively), as well as the mean, of the numbers of shots, switches, and communications needed to attain chemical accuracy in our experiments. Lowest values are highlighted in boldface font. We note that LBFGS-100 and LBFGS-1000 have undefined or ``infinite" values for each summary statistic for the two larger molecules and are hence excluded entirely. \label{table:summary}}
\end{sidewaystable*}

\framebox{\parbox{0.42\textwidth}{The submitted manuscript has been created by UChicago Argonne, LLC, Operator of Argonne National Laboratory (`Argonne'). Argonne, a U.S. Department of Energy Office of Science laboratory, is operated under Contract No. DE-AC02-06CH11357. The U.S. Government retains for itself, and others acting on its behalf, a paid-up nonexclusive, irrevocable worldwide license in said article to reproduce, prepare derivative works, distribute copies to the public, and perform publicly and display publicly, by or on behalf of the Government.  The Department of Energy will provide public access to these results of federally sponsored research in accordance with the DOE Public Access Plan. \url{http://energy.gov/downloads/doe-public-access-plan}.}}

\end{document}